\newcommand{\SL}{\text{SL}}
\newcommand{\MS}{\text{MS}}
\begin{document}

\title{Characterizing Properties for Q-Clustering}
       
\author{\name Reza Bosagh Zadeh \email rezab@stanford.edu\\
   \addr Institute for Computational and Mathematical Engineering \\
    Stanford University
    \AND
\name Gunnar Carlsson \email gunnar@math.stanford.edu\\
   \addr Department of Mathematics\\
    Stanford University
}


\maketitle

\begin{abstract}
We uniquely characterize two members of the Q-Clustering family in an axiomatic framework. 
We introduce properties that use known tree constructions for the purpose of characterization.
To characterize the Max-Sum
clustering algorithm, we use the Gomory-Hu construction, and to characterize Single-Linkage, we
use the Maximum Spanning Tree. 
Although at first glance it seems these properties are `obviously' all that are necessary to characterize Max-Sum and Single-Linkage, we show that this is not the case, by investigating how subsets of properties interact. We conclude by proposing additions to
the taxonomy of clustering paradigms currently in use.
\end{abstract}

\begin{keywords}
  Clustering properties, axioms, submodularity, Q-Clustering
\end{keywords}

\section{Introduction}

Clustering is a ubiquitous task in unsupervised learning, finding application in a large array of fields other than Computer Science.
Any task where one seeks to group together similar objects and separate dissimilar objects can be thought of as `Clustering'.
Although the problem as defined leaves open a great deal of interpretation \citep{avrimthoughts, ulrikethoughts}, in this
paper we concretely focus on a large class of objectives which can be optimized using Queyranne's algorithm for submodular optimization \citep{queyranne,rizzi}.

This class of objectives for Q-Clustering was introduced and used in \cite{qclustering}, which we analyze from an axiomatic perspective. Recently there has been a significant amount of work on clustering axioms and properties \citep{rita,bosagh2009,gunnar}, 
and we build in this direction to provide properties for the class of algorithms which can be expressed as optima of
Queyranne's optimization algorithm.

\cite{qclustering} considered 2 objectives to which we add and analyze a new objective. The two objectives
in the original Q-clustering paper were Single-Linkage and Minimum Description Length (MDL).
To these we add the Max-Sum objective, which seeks to maximize the sum of similarities inside clusters. 
Queyranne's algorithm can optimize the Single-Linkage criterion perfectly for $k$ clusters, but its
perfect recovery is limited to only 2 clusters for Max-Sum and MDL. This is not surprising, since optimizing
Max-Sum and MDL perfectly for $k>2$ is NP-hard. However, there exist factor 2 approximation algorithms
which proceed by cutting the $k-1$ most expensive edges of the Gomory-Hu tree \citep{gomoryhu} associated with 
MDL and Max-Sum. Gomory-Hu Trees exist in general for any symmetric submodular function. Furthermore, the
natural tree construction for Single-Linkage (SL) is the maximum spanning tree (MST), since the SL criterion
can be optimized by cutting the $k-1$ most expensive edges of the MST.

We focus on the underlying principles behind the Q-Clustering family of objectives, and arrive at the conclusion that
they are uniquely characterized by cutting tree edges from well known tree constructions. Furthermore, we prove that
they are the \textit{unique} clustering functions using the mentioned tree constructions. These constructions are defined formally in section \ref{formal}. We show that Max-Sum and Single-Linkage are axiomatically identical, except for which of the above trees they use in the course of their operation. Furthermore, we show that they are the \textit{only} clustering functions determined by those trees.

The three objectives we consider are quite different. The Single-Linkage objective has the advantage that since we are only comparing similarities,
the similarity measure can be an ordered set (arithmetic operations on the
similarities need not be defined).We formalize the property that allows this criterion to only depend 
on the rank ordering of the similarities, so it is
insensitive to monotone transformations of the similarities. This flexibility is useful in applications where only rankings
are available, for example user studies in which humans only provide rankings instead of actual similarity scores.
Unfortunately Single-Linkage is also very sensitive to outliers since all it takes to merge two clusters is a 
\textit{single path} between the two clusters where all similarities are above a threshold. Since there are generally many paths between
two clusters, this is a stringent requirement on the designers of the similarity matrix. The natural tree construction
associated with the Single-Linkage Criterion is the Maximum Spanning Tree (MST) defined formally in the Section \ref{formal}.

The second objective we consider is the Max-Sum objective, which doesn't suffer from
the single-path problem of singe-linkage, but has the tendency to create a single large
cluster, since having more edges in the objective summation is usually better than fewer. The
tree construction we use for this is the Gomory-Hu tree, explained formally in Section \ref{formal}.

The third objective we consider is probabilistic in nature and is based on the Minimum
Description Length principle. We are given a distribution for each data item, and we attempt to find 
clusters so that describing or encoding the clusters
(separately) can be done using as few bits as possible. It is known that the problem
of finding the optimal clusterings minimizing the description length is equivalent to the
problem of minimizing a symmetric submodular function \citep{qclustering}. The tree construction associated with this objective
is the cut-equivalent tree associated with the symmetric submodular MDL function. We do not provide a
a uniqueness theorem for MDL since it is not clear how to define some our axioms in this setting.

It is important to note that we make a stark contrast between \textit{axioms} and \textit{properties}. Although
they both restrict the class of partitioning functions, we expect axioms to appeal to our intuition about clustering,
whereas properties are simply restrictions on the class of partitioning functions, they need not appeal to any
intuition about clustering.

\subsection{Previous Work}

In addition to submodular optimization algorithm by \cite{queyranne} generalized by \cite{rizzi}, our formal framework is based on Kleinberg's \citep{klein1}. We also adopt two of the three axioms proposed in that paper, Consistency and Scale Invariance. We replace the Richness axiom of \cite{klein1} by its version for the case of fixed number of clusters, $k$-Richness.

An axiomatic characterization of Single-Linkage is available in \cite{bosagh2009}, but they do not provide any characterization of Max-Sum or submodular objectives. \cite{flake} present some algorithms which use Minimum Cut Trees for the purpose of Clustering, but they do not discuss axioms or uniqueness theorems. Another line of attack for characterizing clustering methods is using tools from Topology explored in \cite{carlsson2010characterization} and \cite{carlsson2008persistent}. Instead of fixing $k$, there are other ways of circumventing Kleinberg's impossibility result analyzed in \cite{ackerman09}, and a characterization of the class of hierarchical clustering functions is given in \cite{ackerman10}. Submodular objectives for clustering are explored in \cite{steffi} and \cite{qclustering}.

Finally, \cite{BBV08} and \cite{awasthisupervised} present a framework which assumes that, given some data, 
a target clustering is achieved by interacting with a teacher. Interacting with a teacher is a departure from unsupervised learning.

\subsection{Formal Preliminaries} \label{formal}

A partitioning function acts on a set $S$ of $n \ge 2$ points, and pairwise similarities among the points in $S$. The points in $S$ are not assumed to belong to any specific set; the pairwise similarities are the only data the partitioning function has about them. Since we wish to deal with point sets that do not necessarily belong to a specific set, we identify the points with the set $S = \{1, 2, . . . , n\}$. We can then define a similarity function to be any function $s : S \times S \rightarrow \mathbb{R}^+$  such that for distinct $i, j \in S$, we have $s(i, j) > 0$, i.e. $s$ must be positive symmetric, but there is no requirement of triangle inequality.

Sometimes we write $s = \langle e_1, e_2, \ldots, e_{n \choose 2} \rangle$ to mean the set of edges that exist between all pairs of $n$ points. This list is \textit{always} ordered by \textbf{decreasing} similarity. $w(e)$ is the weight of edge $e$ which connects some two points $i, j$. So $w(e) = s(i, j)$.

A partitioning function is a function $F$ that takes a similarity function $s$ on $S \times S$ and returns a $k$-partitioning of $S$. A $k$-partitioning of $S$ is a collection of $k$ non-empty disjoint subsets of $S$ whose union is $S$. The sets in $F(s)$ will be called its \textit{clusters}. Two partitioning functions are equivalent if and only if they output the same partitioning on all values of $s$ - i.e. functionally equivalent. In the next section we define several trees, each associated with a particular graph $G$.

A natural representation for a similarity function $s = \langle e_1, e_2, \ldots, e_{n \choose 2} \rangle$ is a complete weighted graph $G_s$,
whose $n$ nodes correspond to our objects, and edges correspond to similarity scores assigned by $s$.  Note that when there is no ambiguity about which similarity function is being used, we drop the subscript $s$ and simply use $G$. Also, when it is natural to think of the complete graph for the similarity function, we simply use $s$ to refer to the graph, instead of $G_s$. Now some trees will be associated with a particular $s$.

\subsubsection{Minimum Cut Tree}
\label{mctdesc}
Let $G=(V, E,s)$ be any arbitrary weighted undirected connected graph with $|V|=n$.
Two disjoint subsets $A$ and $B$ of $V$ that also cover $V$ define a cut
in $G$. The sum of the weights of the edges crossing the cut defines the cut value. For two nodes $s,t$, a minimum $s$-$t$ cut is a cut of minimum value that separates nodes $s$ and $t$. For an integer $k$, a minimum $k$-cut is a cut of minimum value that  leaves exactly $k$ connected components.

The Minimum Cut Tree of $G$ - call it MCT($G$) - is a tree which has the same nodes as $G$, but a different set of edges and weights. The edges of the Minimum Cut
Tree of $G$ satisfy the following two properties.

\begin{itemize}
	\item For every two nodes $s, t \in V$, the minimum cut in $G$ that separates these points is given by cutting the smallest edge on the unique path between $s$ and $t$ in MCT($G$).
	\item For every two nodes $s, t \in V$, the weight of the smallest edge on the unique path (in MCT($G$)) connecting them equals the size of a minimum $s$-$t$ cut of $G$.
\end{itemize}

Note that MCT($G$) is \textit{not} a subgraph of $G$ as it has different edges and weight function. One may ask if such trees always exist, and indeed for every undirected graph, there always exists a min-cut tree, and they were initially introduced by \cite{gomoryhu}.

Minimum Cut Trees are not always unique, so we define a canonical MCT function which fixes a particular
ordering on pairs of points (the lexicographical ordering), and uses the algorithm defined in \citep{gomoryhu} and outlined momentarily. Under these conditions, the output of the Gomory-Hu algorithm will be deterministic (thus unique) and we will denote its value MCT($G$).

Given an input graph $G = (V, E)$ the Gomory-Hu algorithm maintains a partition of $V$, $(S_1, S_2, \ldots, S_t)$ and a spanning tree $T$ on the vertex set
$\{S_1, \ldots, S_t\}$. Let $w'$ be the function assigning weights to the edges of $T$. Tree $T$ satisfies the following invariant. \textbf{Invariant:} For any edge $(S_i, S_j)$ in $T$ there are vertices $a$ and $b$ in $S_i$ and $S_j$ respectively, such that $w'(S_i, S_j) = f(a, b)$, where $f$ denotes the minimum cut/maximum flow function, and the cut defined by edge $(S_i, S_j)$ is a minimum $a$-$b$ cut in $G$. This invariant is maintained for $n-1$ steps, after which $T$ is a Gomory-Hu tree of $G$ \citep{vaziranibook}. Each step involves a call to a standard $s\text{-}t$ cut Min-Cut/Max-Flow algorithm.

\subsubsection{Submodular Functions}
Fix a finite set $S$. Submodularity is a property enforced on functions mapping subsets of $S$ set to the reals.  Intuitively, a submodular function over the powerset demonstrates ``diminishing returns".  From a practical perspective, there exist polynomial time algorithms for minimizing submodular functions. In this sense they are the discrete analog of convex functions \citep{awesomesub}.
Let $S$ be a finite set.  A function $f\colon 2^S \to \mathbb{R}$ is submodular iff for all subsets $A$ and $B$ of $S$ we have,
$$f(A)+f(B) \geq f(A \cap B) + f(A \cup B)$$
Furthermore, a symmetric function is one for which $f(S \setminus A) = f(A)$. An important note is that Gomory-Hu trees work and can be generalized because the cut function is both submodular and symmetric. Any submodular symmetric function will induce a Gomory-Hu tree \citep{schrijver2003combinatorial}. An example of a symmetric submodular function is mutual information between two sets of random variables $X$ and $Y$,
$$ I(X;Y) = \int_Y \int_X p(x,y) \log{ \left( \frac{p(x,y)}{p_1(x)\,p_2(y)} \right) } \; dx \,dy,$$
 where $p(x,y)$ is the joint probability distribution function of $X$ and $Y$, and $p_1(x)$ and $p_2(y)$ are the marginal probability distribution functions of $X$ and $Y$ respectively. For a deeper discussion and proofs of submodularity, see \cite{toshev2010submodular}.

\subsubsection{Maximum Spanning Tree}

Given a connected, undirected weighted graph $G$, a spanning tree of $G$ is a subgraph which is a tree and connects all the vertices. A single graph can have many different spanning trees. The weight of a spanning tree is computed as the sum of the weights of the edges in the spanning tree. A Maximum Spanning Tree (MST) is then a spanning tree with weight greater than or equal to every other spanning tree.

Similar to Minimum Cut Trees, MSTs also have a rich history. They can be computed efficiently by Kruskal's algorithm. Maximum Spanning Trees are not always unique, but in the case that we fix an edge ordering, it is well known that they are unique for a particular $G$. We denote the canonical Maximum Spanning Tree of a graph as MST($G$).

\subsubsection{Single-Linkage}

Single-Linkage is the clustering function which starts with all points in singleton clusters, and successively merges clusters until only
$k$ clusters are left. The similarity of two clusters is the similarity of the two most similar points inside differing clusters. 

Another way to compute the Single-Linkage $k$-partitioning is to cut the $k-1$ smallest edges of the Maximum Spanning Tree of $G_s$ \citep{mstsl}. It should be noted that the behavior of Single-Linkage is robust against small fluctuations in the weight of the edges in $s$, so long as the order of edges does not change, which can be readily seen from Kruskal's algorithm. 

\subsubsection{Max-Sum}

The objective of Max-Sum is to maximize $\Lambda_s(\Gamma) =  \sum_{c \in \Gamma} \sum_{i,j \in c}s(i, j)$ over all $k$-partitionings $\Gamma = \{A,B\}$. 
Finding the optimal partitioning is NP-hard for $k > 2$. However, for $k=2$ finding the optimal Max-Sum 2-partitioning is the same as finding the global  minimum cut and thus poly-time computable. Finding the overall minimum cut is equivalent to cutting the smallest edge of the Minimum Cut Tree, so for $k=2$ Max-Sum can be reinterpreted as the algorithm which cuts the smallest edge of the Minimum Cut Tree.

Since it is not computationally feasible to optimize the above objective function, we define the following approximation algorithm which has a guaranteed approximation factor $2-2/k$ \citep{vaziranibook}: simply iteratively find and remove the global minimum cut until exactly $k$ connected components remain. This algorithm is called the ``MaxCut" clustering function throughout.

p

\section{Uniqueness Results}

\subsection{Axioms}

Now in an effort to distinguish clustering functions from partitioning functions, we review some \textit{axioms} (not properties) that one may like a clustering function to satisfy. Here is the first one. If $s$ is a similarity function, then define $\alpha \cdot s$ to be the same function with all similarities multiplied by $\alpha$.
\begin{center}
\textsc{Scale-Invariance.} \textit{For any similarity function $s$, $1 \leq k \leq n$, and scalar $\alpha > 0$, we have $F(s,k) = F(\alpha \cdot s,k)$}
\end{center}

This axiom simply requires the function to be immune to stretching or shrinking the data points linearly. It effectively disallows clustering functions to be sensitive to changes in units of measurement - which is desirable. We would like clustering functions to not have any predefined hard-coded similarity values in their decision process.

The next axiom ensures that the clustering function is ``rich" and not crippled in types of partitioning it could output. For a fixed $S$, Let Range($F(\bullet)$) be the set of all possible outputs while varying $s$.
\begin{center}
\textsc{$k$-Richess.} \textit{Range($F(\bullet, k)$) is equal to the set of all $k$-partitionings of $S$}
\end{center}

In other words, if we are given a set of points such that all we know about the points are pairwise similarities, then for any partitioning $\Gamma$, there should exist a $s$ such that $F(s) = \Gamma$. By varying similarities amongst points, we should be able to obtain all possible $k$-partitionings.

The next axiom is more subtle and was initially introduced in \cite{klein1}, along with richness. We call a partitioning function ``consistent" if it satisfies the following: when we increase similarities between points in the same cluster and decrease similarities between between points in different clusters, we get the same result. Formally, we say that $s'$ is a $\Gamma$-\textit{transformation} of $s$ if (a) for all $i,j \in S$ belonging to the same cluster of $\Gamma$, we have $s'(i,j) \ge s(i,j)$; and (b) for all $i,j \in S$ belonging to different clusters of $\Gamma$, we have $s'(i,j) \le s(i,j)$. In other words, $s'$ is a transformation of $s$ such that points inside the same cluster are made more similar and points not inside the same cluster are made less similar.
\begin{center}
\textsc{Consistency.} \textit{Let $s$ be a similarity function, and $s'$ be a $F(s,k)$-transformation of $s$. Then $F(s,k) = F(s',k)$}
\end{center}

In other words, suppose that we run the partitioning function $F$ on $s$ to get back a particular partitioning $\Gamma$. Now, with respect to $\Gamma$, if we shrink in-cluster similarities or expand between-cluster similarities and run $F$ again, we should still get back the same result - namely $\Gamma$.

The difference between these and the axioms defined by \cite{klein1} is that at all times, the partitioning function $F$ is forced to return a fixed number of clusters. If this were not the case, then the above axioms could never be satisfied by any function. In most popular clustering algorithms such as $k$-means, Single-Linkage,  and spectral clustering, the number of clusters to be returned is determined beforehand -- by the human user or other methods -- and passed into the clustering function as a parameter. Note that it is not clear how to extend Consistency to the case where there is a distribution associated with every point, since there
is not an obvious definition of similarity between two points, which is needed to define Consistency in terms of
increasing or decreasing similarities.

\begin{definition}
A Clustering Function is a partitioning function that satisfies Consistency, Scale Invariance, and $k$-Richness.
\end{definition}

We've already introduced two clustering functions.

\begin{theorem}
Single-Linkage and Max-Sum are Clustering functions.
\end{theorem}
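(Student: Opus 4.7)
The plan is to verify each of Scale-Invariance, $k$-Richness, and Consistency for both functions. Scale-Invariance is immediate: SL uses only the edge ordering of $s$ (Kruskal's MST followed by cutting the $k-1$ smallest edges), which is preserved by any positive scalar, and all global-min-cut values scale by $\alpha$, leaving the MaxCut iteration unchanged. For $k$-Richness, fix any target $k$-partitioning $\Gamma=\{C_1,\ldots,C_k\}$ and set $s(i,j)=M$ when $i,j$ share a cluster of $\Gamma$ and $s(i,j)=1$ otherwise, with $M>n^2$. Then the MST consists of all in-cluster edges together with $k-1$ lightest between-cluster edges, and cutting those $k-1$ edges returns $\Gamma$. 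For MaxCut, any $\Gamma$-respecting cut has value at most $n^2$ while any cut splitting some $C_\ell$ includes a weight-$M$ edge, so the $k-1$ successive global min cuts are all $\Gamma$-respecting and the algorithm finishes at $\Gamma$.

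Consistency for SL follows from the standard characterization that SL on $(s,k)$ returns $\Gamma$ iff some threshold $\tau$ makes the connected components of the subgraph $G_{\ge\tau}$ (edges of weight at least $\tau$) equal to $\Gamma$; one may take $\tau$ strictly between the $(k-1)$-st smallest MST edge weight and the next-larger MST edge weight, and ties are handled by the canonical edge ordering already fixed in the paper. If $s'$ is a $\Gamma$-transformation of $s$, then every in-cluster edge weight weakly increases and every between-cluster weight weakly decreases, so the same $\tau$ still witnesses this characterization for $s'$, yielding SL$(s',k)=\Gamma$.

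The main obstacle is Consistency for MaxCut, which I would handle first for $k=2$ and then lift by induction. For $k=2$, let $\Gamma=\{A,B\}$ be the MaxCut output on $s$, so $(A,B)$ is a global min cut. For any competing cut $(A',B')$, partition the vertex set into the four regions $A\cap A'$, $A\cap B'$, $B\cap A'$, $B\cap B'$ and express $v_{(A',B')}-v_{(A,B)}$ as a signed sum of weights over the four cross-class edge sets. The two classes present only in $(A,B)$ connect an $A$-region to a $B$-region, so they are between-cluster; the two classes present only in $(A',B')$ stay within $A$ or within $B$, so they are within-cluster. A $\Gamma$-transformation weakly decreases the former weights and weakly increases the latter, so $v'_{(A',B')}-v'_{(A,B)}\ge v_{(A',B')}-v_{(A,B)}\ge 0$, showing that $(A,B)$ remains a min cut of $s'$. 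For $k>2$, the first MaxCut step on $s'$ can be taken to be the $\Gamma$-respecting cut that MaxCut on $s$ produced first (again invoking the canonical tiebreaker), and restricting $s'$ to each side inherits the $\Gamma$-transformation property relative to the induced sub-partition of $\Gamma$, so the argument recurses until $\Gamma$ is reached.
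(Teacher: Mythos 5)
The paper itself does not prove this theorem---it defers entirely to the cited prior work (Bosagh Zadeh and Ben-David, 2009)---so there is no in-paper argument to compare against line by line. Your Scale-Invariance and $k$-Richness constructions are correct, your threshold-graph characterization gives a clean proof of Consistency for Single-Linkage, and your four-region exchange argument for Max-Sum with $k=2$ is sound (indeed it shows any cut strictly worse than $(A,B)$ in $s$ stays strictly worse in $s'$, so the minimum cuts of $s'$ form a subset of those of $s$ and a fixed tie-breaking order still selects $(A,B)$).

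The genuine gap is the inductive step for $k>2$. You assert that the first cut of MaxCut on $s'$ ``can be taken to be'' the first cut of MaxCut on $s$, invoking a tie-breaker; but that cut can become \emph{strictly} suboptimal after a $\Gamma$-transformation. Take $S=\{1,2,3,4\}$, $k=3$, $s(1,2)=10$, $s(3,4)=5$, and all four remaining edges equal to $1$. MaxCut first cuts $\{1,2\}\mid\{3,4\}$ (value $4$), then separates $3$ from $4$, giving $\Gamma=\{\{1,2\},\{3\},\{4\}\}$. Lowering $s(3,4)$ to $0.5$ is a legal $\Gamma$-transformation since $3$ and $4$ lie in different clusters, yet now isolating $3$ costs $2.5$ while $\{1,2\}\mid\{3,4\}$ costs $4$, so the run on $s'$ begins with a different cut. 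The reason your exchange argument does not forbid this is that it only controls edges classified as inner or outer \emph{with respect to the two sides of the first cut}, whereas the $\Gamma$-transformation is defined with respect to the finer partition $\Gamma$: an edge lying inside one side of the first cut but joining two different clusters of $\Gamma$ (here the edge $(3,4)$) may legally decrease. Once the cut sequence diverges, the recursion collapses: the run on $s'$ can visit union-of-clusters components (here $\{1,2,4\}$) that never arose in the run on $s$, so there is no induction hypothesis about their minimum cuts. In this example the final output happens to remain $\Gamma$, but your argument does not show this in general; what must be proved is that \emph{every} cut performed on $s'$, in whatever component it occurs, is $\Gamma$-respecting, and that needs a different argument than reproducing the original cut sequence.
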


Both Single-Linkage and Max-Sum are clustering functions, with proofs available by \cite{bosagh2009}. So a natural question to ask is what types of \textit{properties} (and not axioms) distinguish these two clustering functions from one another? We introduce two new properties that one may not expect all clustering functions to satisfy, but may at times be desirable.

\subsection{Properties}

For ease of notation, let MST$(s)$ be the Maximum Spanning Tree of $G_s$. Similarly, let MCT$(s)$ be the Minimum Cut Tree of $G_s$. Now we are ready to define two distinguishing properties.

\begin{center}
\textsc{MST-Consistency.} \textit{If $s$ and $s'$ are similarity functions such that MST(s) and MST(s') have the same minimum $k$-cut, then $F(s,k) = F(s',k)$}
\end{center}

In other words, a clustering function is MST-Consistent if it makes all its decisions based on the Maximum Spanning Tree. Note that this does not mean the algorithm must optimize any particular objective, just that its decisions are based on the MST. It is important to note that this property includes both the weights on the edges \textit{and} the structure of the MST. Single-Linkage satisfies this property since it cuts the smallest edge $k-1$ edges of the MST. Note that this is a \textit{property}, not an axiom - we don't expect all clustering functions to make their decisions using the Maximum Spanning Tree. Later in this section we show that Single-Linkage is the \textit{only} clustering function that is MST-Consistent.

Similarly define MCT-Consistency identical to MST-Consistency, with MST replaced with MCT.

\begin{center}
\textsc{MCT-Consistency.} \textit{If $s$ and $s'$ are similarity functions such that MCT(s) and MCT(s') have the same minimum $k$-cut, then $F(s,k) = F(s',k)$}
\end{center}

This property forces a clustering function to make all its decisions based on the Minimum Cut Tree. Max-Sum satisfies this property since it always cuts the smallest $k-1$ edges of the minimum cut tree. We show that Max-Sum is the \textit{only} clustering function that is MCT-Consistent.

Notice that both MST-Consistency and MCT-Consistency imply Scale-Invariance; meaning that if a function is either MST-Consistent or MCT-Consistent, then it is also Scale-Invariant. For this reason, whenever a function satisfies \{MCT, MST\}-Consistency, we ignore Scale-Invariance.

\subsection{Uniqueness Theorems}

Shortly, we will be showing that Single-Linkage and Max-Sum are uniquely characterized by MST-Consistency and MCT-Consistency, respectively. Before doing this, we reflect on the relationships between subsets of our axioms and properties. In doing so, we show that MST/MCT-Consistency properties are \textit{not enough} to characterize SL and Max-Sum, as one might think from a shallow glance. 

We will show that if any of the axioms or properties that we use for proving uniqueness were to be missing, then the uniqueness results do not hold. This means that all axioms and properties are really necessary to prove uniqueness of Single-Linkage and Max-Sum.
\begin{theorem} \label{noteasyms}
Consistency, MCT-Consistency, and $k$-Richness are necessary to characterize Max-Sum.
\end{theorem}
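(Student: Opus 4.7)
The plan is to establish necessity by constructing, for each of the three hypotheses, a partitioning function different from Max-Sum that satisfies the remaining two. For dropping MCT-Consistency, I would invoke Single-Linkage: by the preceding theorem it satisfies Consistency and $k$-Richness, and it differs from Max-Sum on small examples (on a $K_4$ with one tight pair plus a globally loosely attached vertex, the min-cut-based rule peels off the weak vertex while the MST-based rule severs an interior edge). For dropping $k$-Richness, I would use the constant partitioning function $F_{0}(s,k) = \Gamma_0$ for a fixed $k$-partitioning $\Gamma_0$: its output is independent of $s$, so both Consistency and MCT-Consistency hold vacuously, $k$-Richness plainly fails, and $F_0 \neq$ Max-Sum.

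The substantive case is necessity of Consistency, for which I would define $F_\pi(s,k) := \pi(\text{Max-Sum}(s,k))$ for a fixed non-identity permutation $\pi$ of $S$ acting on a partition by relabeling its points. Because Max-Sum is MCT-Consistent and $\pi$ is applied afterwards, $F_\pi$ is MCT-Consistent; because $\pi$ acts as a bijection on the set of $k$-partitionings and Max-Sum is $k$-Rich, $F_\pi$ is $k$-Rich. To see that Consistency fails, write $\Gamma = \text{Max-Sum}(s,k)$ so that $F_\pi(s,k) = \pi(\Gamma)$; since $\pi$ is not the identity, a $\pi(\Gamma)$-transformation of $s$ is generally not a $\Gamma$-transformation, so by boosting an edge that is in-cluster for $\pi(\Gamma)$ but between-cluster for $\Gamma$ one can shift Max-Sum's minimum cut and thereby push $F_\pi$ off $\pi(\Gamma)$.

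The main obstacle is verifying this last step with a concrete witness. A clean one already exists at $n = 3$, $k = 2$: take $s(1,2) = 10$ and $s(1,3) = s(2,3) = 1$, so $\Gamma = \{\{3\},\{1,2\}\}$, and let $\pi = (1\;3)$, giving $F_\pi(s,2) = \{\{1\},\{2,3\}\}$. Then the $\{\{1\},\{2,3\}\}$-transformation $s'$ defined by $s'(1,2) = s'(1,3) = 1$ and $s'(2,3) = 20$ has $\{1\}\mid\{2,3\}$ as its minimum $2$-cut (cut value $2$ versus $21$ for each of the other two), so $\text{Max-Sum}(s',2) = \{\{1\},\{2,3\}\}$ and hence $F_\pi(s',2) = \{\{3\},\{1,2\}\} \neq F_\pi(s,2)$, which breaks Consistency and completes the argument.
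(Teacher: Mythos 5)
Your proposal is correct and follows the same overall decomposition as the paper: one exhibit per hypothesis, namely a permuted version of Max-Sum (for Consistency), Single-Linkage (for MCT-Consistency), and a constant partitioning function (for $k$-Richness). The latter two legs match the paper's almost verbatim. The one genuine difference is in the Consistency leg: the paper introduces the family $\sigma \circ \text{Max-Sum}$ for $\sigma$ an arbitrary permutation of the set of $k$-partitionings, verifies $k$-Richness and MCT-Consistency exactly as you do, but then justifies the failure of Consistency (equivalently, the existence of a non-Max-Sum member) by forward reference to the main uniqueness theorem (Theorem \ref{mainms}), which asserts that Max-Sum is the only Consistent, $k$-Rich, MCT-Consistent function. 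You instead take the special case where the permutation is induced by a relabeling $\pi$ of the points and exhibit a concrete $n=3$, $k=2$ witness; your computation checks out ($s'$ is indeed a valid $\{\{1\},\{2,3\}\}$-transformation, and Max-Sum returns $\{\{1\},\{2,3\}\}$ on both $s$ and $s'$, so $F_\pi$ flips its answer). This buys a self-contained argument that does not lean on the harder uniqueness theorem, at the cost of a small amount of case-checking; strictly speaking all that is needed is that $F_\pi$ differs from Max-Sum while retaining $k$-Richness and MCT-Consistency, which your example already establishes. The only cosmetic gap is that a point-relabeling such as $(1\;3)$ must be specified uniformly for every $n \ge 2$ for $F_\pi$ to be a single partitioning function, but this is trivial to arrange.
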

\begin{proof}
For each of the mentioned properties, we show that all the other properties and axioms together are not enough to uniquely characterize Max-Sum. To this end, for each of  property, we exhibit an algorithm that acts differently than Max-Sum, and satisfies all the properties except for one. In other words, we show that without each of these properties, the remaining ones do not uniquely characterize Max-Sum.

\textit{Consistency is necessary.} We define the Minimum Cut Tree Cuts family of partitioning functions. As usual, the task is to partition $n$ points into $k$ clusters. Let $\sigma$ be a permutation function for the set of all $k$-partitionings of $S$. A particular member of the MCT cuts family computes the Max-Sum $k$-partitioning, then runs $\sigma$ on the output of Max-Sum. The entire family is obtained by varying the particular permutation used. Since Max-Sum is $k$-Rich, and $\sigma$ is a bijection, then all the members of MCT-Cuts are also $k$-Rich. Since Max-Sum is MCT-Consistent and $\sigma$ does not look at the input, all the members of MCT-Cuts are also MCT-Consistent. However, it is not true that all members MCT-Cuts are Consistent. This is implied by theorem \ref{mainms}, which says the \textit{only} Consistent member of MST-Cuts is Max-Sum itself, i.e. the case that $\sigma$ is the identity permutation.

\textit{MCT-Consistency is necessary.} Consider that Single-Linkage satisfies Consistency, Scale-Invariance, and $k$-Richess, but is obviously not the same function as Max-Sum. Thus MCT-Consistency is necessary.

\textit{$k$-Richness is necessary.} Now consider the Constant clustering function which always returns the first $n-k+1$ elements of $S$ as a single cluster and returns the remaining $k$ as singleton clusters (a singleton is a cluster with a single point in it), making a total of k clusters. Because this function does not look at $s$, it is trivially MST-Consistent, Consistent, and Scale-Invariant. However, it is not $k$-Rich because it always returns some singletons - i.e. we could never reach a $k$-partitioning that has no singletons.
\end{proof}

\begin{theorem} \label{noteasysl}
Consistency, MST-Consistency, and $k$-Richness are necessary to characterize Single-Linkage.
\end{theorem}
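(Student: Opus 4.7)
The plan is to mirror the structure of the proof of Theorem \ref{noteasyms}, treating each of the three conditions in turn and, for each one, exhibiting a partitioning function that satisfies the other two conditions (plus Scale-Invariance, which is implied by MST-Consistency) yet disagrees with Single-Linkage. This shows that dropping any one condition breaks uniqueness.

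For the necessity of Consistency, I would introduce the \emph{MST-Cuts} family, the direct analog of the MCT-Cuts family from Theorem \ref{noteasyms}. A member of this family is indexed by a permutation $\sigma$ on the set of $k$-partitionings of $S$: it computes the Single-Linkage $k$-partitioning of $s$ and then returns $\sigma$ applied to that partitioning. Since Single-Linkage is $k$-Rich and $\sigma$ is a bijection on the $k$-partitionings, every member is $k$-Rich; since $\sigma$ ignores $s$ entirely, every member inherits MST-Consistency from Single-Linkage. The separation is supplied by the forthcoming uniqueness theorem for Single-Linkage (the SL-analog of Theorem \ref{mainms}), which identifies Single-Linkage as the unique Consistent member of MST-Cuts. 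Thus any non-identity $\sigma$ yields an MST-Consistent, $k$-Rich partitioning function that is not Consistent and is not Single-Linkage.

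For the necessity of MST-Consistency, I would simply invoke Max-Sum: by the earlier theorem it is a Clustering Function, so it satisfies Consistency, Scale-Invariance, and $k$-Richness; and it is plainly not the same function as Single-Linkage, so MST-Consistency is exactly the property that separates them. For the necessity of $k$-Richness, I would reuse the \textbf{Constant} partitioning function from Theorem \ref{noteasyms}, which returns $\{1, \ldots, n-k+1\}$ as one cluster and the remaining $k-1$ points as singletons. Because this function never inspects $s$, it trivially satisfies MST-Consistency, Consistency, and Scale-Invariance; but its range contains no $k$-partitioning without a singleton, so it is not $k$-Rich.

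The only real obstacle is conceptual rather than technical: the Consistency bullet presupposes the uniqueness theorem for Single-Linkage, so it only becomes fully rigorous once that theorem is in hand. The other two bullets are immediate from the definitions and from the already-stated fact that Max-Sum is a Clustering Function.
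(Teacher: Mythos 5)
Your proposal is correct and matches the paper's own (omitted-for-space) argument essentially verbatim: the MST-Cuts family for Consistency, Max-Sum as the witness for MST-Consistency, and the Constant partitioning function for $k$-Richness, with the Consistency bullet leaning on the uniqueness theorem for Single-Linkage just as the paper's proof of Theorem \ref{noteasyms} leans on Theorem \ref{mainms}.
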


Proof omitted for space, but it very similar to proof of theorem \ref{noteasyms}. A summary of these results is available in table \ref{cat}. Now that we have seen our properties do not trivially characterize neither Single-Linkage nor Max-Sum, we can move onto proving the uniqueness theorems.

\begin{table*}
\begin{center}
  \begin{tabular}{ |c|c|c||c|c| }
    \hline
                     			& Consistency &  $k$-Richness 		& MST-Consistency  & MCT-Consistency \\ \hline
      Single-Linkage 		& \checkmark 	& \checkmark 		& \checkmark		& $\times$		 \\ \hline
      Max-Sum			& \checkmark 	& \checkmark	 	& $\times$ 		& \checkmark		 \\ \hline \hline
      MST cuts family		& $\times$	& \checkmark	 	& \checkmark 		& $\times$		 \\ \hline
      MCT cuts family		& $\times$	& \checkmark	 	& $\times$ 		& \checkmark		 \\ \hline
      Constant partitioning   & \checkmark 	& $\times$	 	& \checkmark 		& \checkmark		 \\
    \hline
  \end{tabular}
\caption{\label{cat} Overview of discussed partitioning functions. Even if one were to consider more partitioning functions, as a consequence of theorems \ref{mainsl} and \ref{mainms}, the Single-Linkage and Max-Sum rows are unique amongst all partitioning functions. }
\end{center}
\end{table*}

\begin{lemma} \label{swap}
Given a Consistent partitioning function $F$, and a similarity function $s$ with edges in descending order of similarity
$$s = \langle e_1, e_2, \ldots,e_{p},e_{q}, \ldots  e_{n \choose 2} \rangle$$
then for all $k>0$, if $e_p$ and $e_q$ are both inner edges or both outer edges (w.r.t. $F(s, k)$), we have
$$F(\langle e_1, e_2, \ldots,e_{q},e_{p}, \ldots  e_{n \choose 2} \rangle, k) = F(s, k)$$
\end{lemma}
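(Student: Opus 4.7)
The plan is to argue by contradiction via a common Consistency intermediate. Let $\Gamma = F(s,k)$ and suppose $F(s',k) = \Gamma'$; the goal is to force $\Gamma' = \Gamma$. The idea is to construct a single similarity function $t$ that is simultaneously a $\Gamma$-transformation of $s$ and a $\Gamma'$-transformation of $s'$. Consistency applied to $s$ then gives $F(t,k) = \Gamma$, Consistency applied to $s'$ gives $F(t,k) = \Gamma'$, and therefore $\Gamma = \Gamma'$.

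The bulk of the construction of $t$ is routine because $s$ and $s'$ agree everywhere except on $e_p$ and $e_q$. For every edge $e \notin \{e_p, e_q\}$ the two Consistency inequalities at $e$ collapse to a single constraint of the form $t(e) \ge s(e)$ or $t(e) \le s(e)$, which is always satisfiable (take $t(e)$ very large when $e$ is inner to either of $\Gamma, \Gamma'$, and very small when outer to both). All of the combinatorial content therefore resides at the two swapped edges. Write $w_s(e_p) = a \ge b = w_s(e_q)$ and recall that both are inner to $\Gamma$ by hypothesis. The $\Gamma$-side forces $t(e_p) \ge a$ and $t(e_q) \ge b$; these are compatible with the $\Gamma'$-side precisely when $e_p$ is also inner to $\Gamma'$, in which case setting $t(e_p) = t(e_q) = M$ for a sufficiently large $M$ completes the construction and yields the contradiction. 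The both-outer-to-$\Gamma$ case is entirely dual, with the critical edge becoming $e_q$.

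The main obstacle I expect is the residual subcase where the higher-weight swapped edge $e_p$ is outer to $\Gamma'$: here the constraints $t(e_p) \ge a$ from the $s$-side and $t(e_p) \le s'(e_p) = b$ from the $s'$-side are incompatible since $a > b$, so the direct upward-cone intermediate does not exist. To eliminate this possibility I would pair the above argument with a mirror construction, either by replacing the upward Consistency cone of one of $s, s'$ with its dual downward cone, or by running the intermediate argument on $e_q$ from $s'$'s perspective, and then combining the two halves to force $\Gamma'$ to agree with $\Gamma$ on the endpoints of $e_p$. Making this residual case analysis airtight using only Consistency is the step where I expect the technical subtlety of the proof to concentrate.
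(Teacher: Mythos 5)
There is a genuine gap, and it sits exactly where you predicted it would: the residual subcase in which $e_p$ is inner to $\Gamma$ but outer to $\Gamma'$ is never resolved, and neither of your suggested repairs obviously closes it. A ``downward cone'' is not a valid use of Consistency, which only propagates a label forward, from a function $u$ to an $F(u,k)$-transformation of $u$; it tells you nothing about a $t$ of which $s$ is a $\Gamma$-transformation, because you do not know $F(t,k)$. Likewise, rerunning the intermediate argument on $e_q$ from the $s'$ side constrains $e_q$, not the status of $e_p$ in $\Gamma'$, so ``combining the two halves'' is left entirely to the reader. (A smaller slip: for an edge $e\notin\{e_p,e_q\}$ that is inner to $\Gamma$ but outer to $\Gamma'$, you cannot take $t(e)$ very large --- the two constraints force $t(e)=s(e)$ exactly. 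This is harmless, since setting $t=s$ off the swapped pair works, but the recipe as stated is wrong.)

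The deeper problem is that the contradiction-via-common-intermediate scaffolding is what manufactures the unresolvable case in the first place. The paper's proof is a single application of Consistency to $s$ itself: if $e_p$ and $e_q$ are both outer with respect to $\Gamma=F(s,k)$, shrink the between-cluster edge $e_p$ until $w(e_p)<w(e_q)$; if both are inner, expand the within-cluster edge $e_q$ until $w(e_q)>w(e_p)$. In either case the resulting function realizes the transposed ordering $\langle e_1,\ldots,e_q,e_p,\ldots\rangle$ and is a $\Gamma$-transformation of $s$, so $F(s',k)=F(s,k)=\Gamma$ directly, with no need to introduce $\Gamma'=F(s',k)$ or to reason about how the two partitions relate. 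Your approach treats $s$ and $s'$ symmetrically and hunts for a common descendant; the hypothesis that the two adjacent edges have the \emph{same} type (both inner or both outer) is precisely what makes the asymmetric one-step move available, and your argument never exploits it. As written, the proposal does not establish the lemma.
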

\begin{proof}
In other words, whenever we have two edges of the same type (inner or outer), in neighboring positions in the edge ordering of $s$, we can swap their positions while maintaining the output of $F$. This is true because if both $e_p$ and $e_q$ are outer edges, then we can shrink $e_p$ until $w(e_p) < w(e_q)$ all the while preserving the output of $F$ (by Consistency). Similarly, if both $e_p$ and $e_q$ are inner edges, we can expand $e_q$ until $w(e_p) < w(e_q)$.
\end{proof}

\begin{theorem} \label{mainms}
Max-Sum is the only MCT-Consistent Clustering function.
\end{theorem}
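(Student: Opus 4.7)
The plan is to prove that any MCT-Consistent Clustering function $F$ coincides with Max-Sum on every input. Fix a similarity function $s$ and let $\Gamma^\star$ denote the minimum $k$-cut of MCT$(s)$, which is precisely Max-Sum$(s,k)$. The strategy is to manufacture a second similarity function $s''$ for which the three Clustering axioms force $F(s'',k)=\Gamma^\star$ and for which MCT$(s'')$ also has minimum $k$-cut $\Gamma^\star$; MCT-Consistency then transfers the value to $s$, giving $F(s,k)=F(s'',k)=\Gamma^\star$.

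To construct $s''$, invoke $k$-Richness to pick some $s_0$ with $F(s_0,k)=\Gamma^\star$. By Scale-Invariance, rescale so every weight of $s_0$ lies in $(0,\epsilon]$ for a small parameter $\epsilon>0$ to be fixed. Then apply Consistency along the $\Gamma^\star$-transformation that raises every inner edge (both endpoints in a common cluster of $\Gamma^\star$) to a large value $M$, while leaving outer edges untouched. Call the result $s''$; by construction $F(s'',k)=\Gamma^\star$, and $s''$ has every inner weight equal to $M$ and every outer weight at most $\epsilon$.

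The heart of the proof is the analysis of MCT$(s'')$. For any $a,b$ inside a common cluster $c$, separating them must break the weight-$M$ clique on $c$; isolating a single vertex is cheapest and costs at least $(|c|-1)M\ge M$. For $a,b$ in different clusters, putting one entire cluster on its own side cuts only outer edges, costing at most $n^2\epsilon$. Choosing $\epsilon$ so that $M>n^2\epsilon$ creates a strict gap. Since Gomory-Hu edge weights equal minimum cuts between endpoints, every edge of MCT$(s'')$ with same-cluster endpoints has weight $\ge M$, and every cross-cluster edge has weight $\le n^2\epsilon$. Walking along the MCT path between two nodes of a common cluster $c$, the minimum edge on the path equals their minimum cut and hence has weight $\ge M$, so every edge on the path is same-cluster; inductively the whole path stays in $c$. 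Thus each cluster of $\Gamma^\star$ spans a connected subtree of MCT$(s'')$, so exactly $k-1$ edges of the tree cross clusters, and these are its $k-1$ lightest edges. Deleting them recovers $\Gamma^\star$ as the minimum $k$-cut of MCT$(s'')$.

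MCT-Consistency now applies: MCT$(s)$ and MCT$(s'')$ share the minimum $k$-cut $\Gamma^\star$, so $F(s,k)=F(s'',k)=\Gamma^\star=$ Max-Sum$(s,k)$, completing the characterization. The main obstacle is the structural step: establishing the clean $M$ versus $n^2\epsilon$ separation between same-cluster and cross-cluster minimum cuts in $s''$, and parlaying it into the claim that each $\Gamma^\star$-cluster is a connected subtree of MCT$(s'')$. Once that holds, the rest is axiomatic bookkeeping.
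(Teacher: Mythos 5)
Your proposal is correct and follows the same high-level strategy as the paper's proof: use $k$-Richness to find a preimage of the Max-Sum partition $\Gamma^\star$ under $F$, use Scale-Invariance and Consistency to deform that instance into one whose minimum cut tree visibly has $\Gamma^\star$ as its minimum $k$-cut, and then invoke MCT-Consistency to transfer the conclusion to $s$. Where you genuinely differ is in the execution of the structural step. The paper makes the inner edges match those of $s$ exactly, shrinks the outer edges until their \emph{sum} is below the smallest inner edge, reorders the outer edges using the swap lemma (Lemma~\ref{swap}), and then reasons about the intermediate states of the Gomory-Hu algorithm to identify the $k-1$ lightest tree edges. You instead set all inner edges to a common large value $M$ and all outer edges below $\epsilon$ with $M>n^2\epsilon$, and read off the tree structure directly from the defining property of the Gomory-Hu tree (each tree edge's weight equals the minimum cut between its endpoints): same-cluster minimum cuts are at least $M$, cross-cluster minimum cuts are at most $n^2\epsilon$, so each cluster spans a connected subtree and the $k-1$ cross-cluster edges are exactly the lightest ones. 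This avoids the swap lemma and the algorithm-tracking entirely, and the path-staying-inside-the-cluster argument is a cleaner and more rigorous justification of the claim the paper asserts somewhat informally in its step 6. The only piece you leave implicit is the converse half of the characterization --- that Max-Sum itself is Consistent, $k$-Rich, and MCT-Consistent --- which the paper likewise only asserts with a citation.
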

\begin{proof}
Let $F$ be any Consistent, $k$-Rich, MCT-Consistent clustering function, and let $s$ be any similarity function on $n$ points. $k$ is an integer with $1 \leq k \leq n$. We want to show that for all $s,k$, $F(s,k) = \MS(s,k)$, where $\MS(s,k)$ is the result of  Max-Sum on $s$. For this purpose, we introduce the partitioning $\Gamma$ as whatever the output of MS is on $s$, so $\MS(s,k)=\Gamma$. Whenever we say ``inner" or ``outer" edge for this proof, we mean with respect to $\Gamma$.

By $k$-Richness of $F$, there exists an $s_1$ such that $F(s_1,k) = \MS(s,k) = \Gamma$. Now, through a series of transformations that preserve the output of $F$, we transform $s_1$ into $s_2$, then $s_2$ into $s_3$, $\ldots$, until we arrive at $s$.
\begin{enumerate}
	\item By $k$-Richness, we know there exists an $s_1$ such that $F(s_1, k) = \MS(s, k) = \Gamma$.
	\item Using scale-invariance, we can linearly shrink all edges in $s_1$ until they are all smaller than the smallest edge in $s_1$, call the result $s_2$.
	\item Now using consistency, we can expand all inner edges of $s_1$ until they are exactly of the same weight as they appear in $s$. Call the result $s_3$. Thus, $s_3$ has all inner edges set to exactly the same weight as in $s$, but the outer edges of $s_3$ are all smaller than their counterparts in $s$.
	\item Using consistency, we can shrink all outer edges of $s_3$ until the sum of all outer edges of $s_3$ is smaller than the smallest inner edge of $s_3$ (and $s$). Call the result $s_4$.
	\item By lemma \ref{swap} we can reorder all outer edges in $s_4$ until their order among themselves is the same order as they appear in $s$. Call the result $s_5$.	
	\item For this step it helps to review the construction of Gomory-Hu trees outlined in section \ref{mctdesc}. 
Consider that the sum of all outer edges in $s_5$ is less than any single inner edge in $s_5$. Thus cutting all outer edges in $s_5$ is cheaper
than cutting any single inner edge. Furthermore, the removal of these outer edges in $G_{s_5}$ results in $k$ connected components. To construct a Gomory-Hu tree for $G_{s_5}$, we build the tree by maintaining the Gomory-Hu invariant while querying points in differing clusters of $\Gamma$ with each step. The standard $s$-$t$ MinCut algorithm in the Gomory-Hu iteration is guaranteed to return a subset of the outer edges of $s_5$, since the sum of all outer edges has less weight than any individual inner edge. So after the first $k-1$ iterations of Gomory-Hu, the intermediate tree $T$ will have $k-1$ edges and $k$ supernodes, and the weight of the edges of $T$ will be smaller than any inner edge of $s$. We then run Gomory-Hu to completion to obtain the $T_{s_5}$ = MCT($s_5$). We do the same for $s$, and call the result $T_s$.

\item Since $T_s$ and $T_{s_5}$ are trees, their minimum $k$-cut is given by cutting their $k-1$ smallest edges respectively.
However, since both of their $k-1$ lightest edges correspond to cutting the outer edges of $s_5$, then $T_s$ and $T_{s_5}$ have the same minimum $k$-cut. Thus
by MCT-Consistency we can transform $s_5$ to $s$ while maintaining the output of $F$.
	\item Thus we have $F(s_5,k) = F(s,k) = \Gamma$.
\end{enumerate}
We started with any $s$, and showed that $F(s,k) = \Gamma = \MS(s,k)$.
We also know that Max-Sum satisfies all 3 axioms and MCT-Consistency. Thus
it is uniquely characterized.
\end{proof}

\begin{theorem} \label{mainsl}
Single-Linkage is the only MST-Consistent Clustering function.
\end{theorem}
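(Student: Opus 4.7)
The plan is to mirror the proof of Theorem \ref{mainms}, replacing Gomory-Hu tree manipulations with Maximum Spanning Tree manipulations. Let $F$ be any Consistent, $k$-Rich, MST-Consistent clustering function, fix a similarity function $s$ and an integer $k$ with $1\le k\le n$, and let $\Gamma = \SL(s,k)$; by the Kruskal-based characterization cited in Section \ref{formal}, $\Gamma$ is obtained by removing the $k-1$ lightest edges of MST($s$). Throughout, ``inner'' and ``outer'' refer to $\Gamma$. The strategy is to build a similarity function $s_5$ with $F(s_5,k)=\Gamma$ such that MST($s_5$) and MST($s$) induce the same minimum $k$-cut; MST-Consistency then forces $F(s,k)=\Gamma=\SL(s,k)$.

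The chain of transformations parallels the one used for Max-Sum. First, $k$-Richness yields $s_1$ with $F(s_1,k)=\Gamma$. Scale-Invariance uniformly shrinks $s_1$ to $s_2$ so that every edge weight falls below the smallest edge weight of $s$. Consistency then expands each inner edge of $s_2$ to its exact weight in $s$, producing $s_3$ whose inner edges agree with those of $s$ while outer edges remain very small. Consistency applied again shrinks the outer edges of $s_3$ until their total weight drops below the smallest inner edge, giving $s_4$; finally, repeated applications of Lemma \ref{swap} to adjacent outer edges permute the outer edges of $s_4$ into the relative order they occupy in $s$, producing $s_5$. Each transformation preserves the output of $F$, so $F(s_5,k)=\Gamma$.

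The key step is an MST calculation on $s_5$. Because every outer edge is strictly lighter than every inner edge, Kruskal's algorithm processes all inner edges first; since the inner-edge subgraph restricted to any cluster $C$ of $\Gamma$ is complete, Kruskal builds a spanning tree inside each cluster using only inner edges. The remaining $k-1$ edges of MST($s_5$) must then be outer edges bridging the $k$ components. Consequently the $k-1$ lightest edges of MST($s_5$) are these outer edges, and cutting them yields $\Gamma$. By the very definition of Single-Linkage, cutting the $k-1$ lightest edges of MST($s$) also yields $\Gamma$, so MST($s$) and MST($s_5$) share the same minimum $k$-cut. MST-Consistency then gives $F(s,k)=F(s_5,k)=\Gamma$, and since Single-Linkage itself satisfies Consistency, $k$-Richness, and MST-Consistency, uniqueness follows.

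The main obstacle is the MST calculation described above: verifying that the condition ``all outer edges lie below all inner edges'' in $s_5$ suffices to force Kruskal to exhaust all inner edges before touching any outer edge, so that the minimum $k$-cut of MST($s_5$) coincides with $\Gamma$. This step is more elementary than the analogous Gomory-Hu reasoning in the proof of Theorem \ref{mainms} because Kruskal is a greedy algorithm on raw edge weights rather than on cut values, but it is the essential link that lets MST-Consistency bridge $s_5$ and $s$.
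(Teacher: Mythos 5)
Your proposal is correct and follows essentially the same route as the paper's own proof: the identical chain $s_1\to s_2\to s_3\to s_4\to s_5$ via $k$-Richness, Scale-Invariance, Consistency, and Lemma \ref{swap}, followed by an appeal to MST-Consistency. Your Kruskal step is in fact slightly more careful than the paper's (you argue that both minimum $k$-cuts equal $\Gamma$ rather than asserting the two MSTs coincide), but the substance is the same.
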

\begin{proof}

Let $F$ be any Consistent, $k$-Rich, MST-Consistent clustering function, and let $s$ be any similarity function on $n$ points. $k$ is an integer with $1 \leq k \leq n$. We want to show that for all $s,k$, $F(s,k) = \SL(s,k)$, where $\SL(s,k)$ is the result of Single-Linkage on $s$. For this purpose, we introduce the partitioning $\Gamma$ as whatever the output of SL is on $s$, so $\SL(s,k)=\Gamma$. Whenever we say ``inner" or ``outer" edge for this proof, we mean with respect to $\Gamma$.

By $k$-Richness of $F$, there exists an $s_1$ such that $F(s_1,k) = \SL(s,k) = \Gamma$. Now, through a series of transformations that preserve the output of $F$, we transform $s_1$ into $s_2$, then $s_2$ into $s_3$, $\ldots$, until we arrive at $s$.
\begin{enumerate}
	\item By $k$-Richness, we know there exists an $s_1$ such that $F(s_1, k) = \SL(s, k) = \Gamma$.
	\item Using scale-invariance, we can linearly shrink all edges in $s_1$ until they are all smaller than the smallest edge in $s_1$, call the result $s_2$.
	\item Now using consistency, we can expand all inner edges of $s_1$ until they are exactly of the same weight as they appear in $s$. Call the result $s_3$. Thus, $s_3$ has all inner edges set to exactly the same weight as in $s$, but the outer edges of $s_3$ are all smaller than their counterparts in $s$.
	\item Using consistency, we can shrink all outer edges of $s_3$ until the sum of all outer edges of $s_3$ is smaller than the smallest inner edge of $s_3$ (and $s$). Call the result $s_4$.
	\item By lemma \ref{swap} we can reorder all outer edges in $s_4$ until their order among themselves is the same order they appear in $s$, we can do the same for the inner edges. Call the result $s_5$.	
	\item By Kruskal's algorithm for MST, $s$ and $s_5$ have the same MST, call them $T_s$ and $T_{s_5}$. Since $T_s$ and $T_{s_5}$ are trees, their minimum $k$-cut is given by cutting their $k-1$ smallest edges respectively.
However, since both of their $k-1$ lightest edges correspond to cutting the outer edges of $s_5$, then $T_s$ and $T_{s_5}$ have the same minimum $k$-cut. Thus
by MST-Consistency we can transform $s_5$ to $s$ while maintaining the output of $F$.
	\item Thus we have $F(s_5,k) = F(s,k) = \Gamma$.
\end{enumerate}
We started with any $s$, and showed that $F(s,k) = \Gamma = \SL(s,k)$.
We also know that Single-Linkage satisfies all 3 axioms \citep{bosagh2009} and MST-Consistency. Thus
it is uniquely characterized.
\end{proof}

\section{Conclusions \& Future directions}

In this paper we have characterized two clustering algorithms which are usually treated separately with differing motivating principles. Using our framework, one can aim to build a suite of abstract properties of clustering that will induce a taxonomy of clustering paradigms. Such a taxonomy should serve to help utilize prior domain knowledge to allow educated choice of a clustering method that is appropriate for a given clustering task.

The chief contributions of this paper are the characterizations of clustering through associated tree constructions. The tree construction relevant for Max-Sum turned out to be the Gomory-Hu tree, and the tree construction for Single-Linkage was the Maximum Spanning Tree. It is expected that the general Gomory-Hu tree construction for \textit{any} submodular function can also be used for characterizing other objectives, but it is not immediately clear how to change axioms such as Consistency to achieve this and we leave it for future work.

Our contribution provides insight into the connection between Submodularity, Single-Linkage, and Max-Sum clustering functions. For the latter two, we show they are axiomatically identical except for one property. By considering the listing in table \ref{cat}, we demonstrate the type of desired taxonomy of clustering functions based on the properties each satisfies. 

Although at first glance it seems these properties are `obviously' all that are necessary to characterize Max-Sum and Single-Linkage, we show that this is not the case, by way of Theorems \ref{noteasyms} and \ref{noteasysl}. To investigate the ramifications of algorithms which satisfy only a subset of our properties, we introduced the MCT Cuts family of partitioning functions, of which Max-Sum is the only Consistent member. The uniqueness theorems came about as a result of forcing functions to focus on Minimum/Maximum Cut/Spanning Trees.


\acks{We would like to acknowledge Shai Ben-David, Stefanie Jegelka, Avrim Blum, and Ashish Goel for very valuable discussions.}


\newpage

\vskip 0.2in
\bibliography{jmlr2012}

\end{document}